\newtheorem{lemma}{Lemma}
\newcommand{\Ref}[1]{Ref.~\cite{#1}}
\newcommand{\Refs}[1]{Refs.~\cite{#1}}
\newcommand{\SEC}[2]{\section{\label{sec:#1}#2}}
\newcommand{\Sec}[1]{Sec.~\ref{sec:#1}}
\newcommand{\FIG}[2]{\caption{\label{fig:#1}#2}}
\newcommand{\Fig}[1]{Fig.~\ref{fig:#1}}
\newcommand{\EQ}[1]{\label{eq:#1}}
\newcommand{\Eq}[1]{Eq.~(\ref{eq:#1})}
\newcommand{\prm}[1]{#1^\prime}
\newcommand{\avg}[1]{\langle#1\rangle}
\newcommand{\var}[1]{\mathrm{var}\{#1\}}
\newcommand{\skw}[1]{\mathrm{skw}\{#1\}}
\newcommand{\Li}{\mathrm{Li}}
\begin{document}

\title{The Langevin equation for systems with a preferred spatial direction}
\date{\today}
\author{Roman Belousov}\email{belousov.roman@gmail.com}
\affiliation{The Rockefeller University, New York 10065, USA}
\author{E.G.D. Cohen}\email{egdc@mail.rockefeller.edu}
\affiliation{The Rockefeller University, New York 10065, USA}
\affiliation{Department of Physics and Astronomy, The University of Iowa, Iowa
	City, Iowa 52242, USA}
\author{Lamberto Rondoni}\email{lamberto.rondoni@polito.it}
\affiliation{Dipartimento di Scienze Matematiche and Graphene@Polito Lab
Politecnico di Torino - Corso Duca degli Abruzzi 24, 10125, Torino, Italy}
\affiliation{INFN, Sezione di Torino - Via P. Giuria 1, 10125, Torino, Italy}
\begin{abstract}
  In this paper, we generalize the theory of Brownian motion and the Onsager-Machlup
  theory of fluctuations for spatially symmetric systems to equilibrium and nonequilibrium
	steady-state systems with a preferred spatial direction, due to an external force.
	To do this, we extend the Langevin equation to include a bias, which is introduced
	by the external force and alters the Gaussian structure of the system's fluctuations.
	By solving this extended equation, we demonstrate that the statistical properties
	of the fluctuations in these systems can be predicted from physical observables,
	such as the temperature and the hydrodynamic gradients.
\end{abstract}
\keywords{Langevin equation; Brownian motion; fluctuation theory; external field;
external force; preferred spatial direction; dissymmetry}
\maketitle

\SEC{intro}{Introduction}
The dynamical theory of fluctuations in physical systems began to assume its modern
form with the seminal papers of Onsager and Machlup \cite{OM1953,MO1953}. They
proposed to describe the time evolution of the thermodynamic fluctuating quantities,
as well as of the hydrodynamic and electrodynamic variables, by a stochastic Langevin
equation \cite[Chapters 1-2]{Coffey2012Langevin}. Originally Onsager and Machlup
considered fluctuations only in equilibrium systems. Generalizations of the Langevin
equation for fluctuations to nonequilibrium steady states followed, {\it e.g.}
\Refs{KSSH2015I,KSSH2015II,MorgadoQ2016}, as discussed later in this paper.

The formalism of the Langevin equation was first developed in the theory of Brownian
motion \cite[Chapters 1-2]{Coffey2012Langevin}. Later Onsager and Machlup proposed
\cite{OM1953,MO1953} that the fluctuations of the thermodynamic quantities can be
described by the {\it same} stochastic equation, as used for the velocity fluctuations
of a Brownian particle in an equilibrium system. That is, the time evolution of
a fluctuating quantity $\alpha(t)$ obeys the following Langevin dynamics \footnote{
	The differential equation~(\ref{eq:dff}) is of the first order with respect to
	time. A second order version of the Langevin equation was considered in \Ref{MO1953}
	for systems, in which the fluctuations of the currents should be taken into
	account. This modifies merely the deterministic character of the resulting dynamics,
	while the steady-state probability of the fluctuations, studied in this paper,
	remain unchanged, {\it cf} \cite[Section II.3]{Chandrasekhar1943}.
}:
\begin{eqnarray}\EQ{dff}
  d\alpha(t) = -A \alpha(t) dt + B dW(t)\text{.}
\end{eqnarray}
Here $A$ and $B$ are positive constants, whose values and physical interpretation
depends on the system under consideration, while $t$ is the time and $dW(t)$ is
a white noise, defined as a differential of a Wiener process $W(t)$
\cite[Chapter 1]{Coffey2012Langevin}:
\begin{equation}\EQ{dif}
  W(t) = \int_0^t dW(\prm{t})\text{.}
\end{equation}

The first term on the right hand side of \Eq{dff} is a damping force with a
friction constant $A$, which ensures that the fluctuations of the quantity
$\alpha(t)$ decay to a macroscopically observable average value $\avg{\alpha(t)}$.
The second term, $B dW(t)$, represents physically a microscopic noise of constant
intensity $B$. It has a Gaussian nature, since $W(t)$ in \Eq{dif} is by definition
a normally distributed random variable of zero mean and variance $t$.

By solving \Eq{dff}, Onsager and Machlup predicted a Gaussian structure of the
fluctuations in equilibrium systems. However, they explicitly omitted in their
treatment \cite{OM1953} rotating systems and systems subject to an external
field, because these do not possess the property of microscopic reversibility.

In this paper, we will treat the dynamical theory of fluctuations for a class of
systems subject to an external field. This includes not only equilibrium systems
in an external potential, such as a gravitational potential, but in addition systems
maintained in a nonequilibrium steady state by an external thermodynamic, hydrodynamic
or electrodynamic gradient.

Indeed, recent studies confirm a non-Gaussian structure of fluctuations in this
class of systems \cite{KSSH2015I,Gustavsson_2006,UtsumiSaito2009,PRE2016}. In
particular, \Refs{Gustavsson_2006,UtsumiSaito2009,PRE2016} report that the
probability distribution of their fluctuations becomes asymmetric and, thus, acquires
{\it a skewness} \footnote{
  Skewness is related to the third moment of a probability distribution, so that
  symmetric distributions, like the Gaussian, have zero skewness.
}.

The above mentioned theoretical and experimental studies indicate that the probability
distribution of fluctuations is biased in the presence of a {\it preferred spatial
direction}, which is induced by an externally applied force. In contrast to such
systems, the original Langevin equation \Eq{dff} has a peculiar symmetry, since
it has no preferred spatial direction. For, it assigns equal probabilities to both
positive and negative fluctuations of $\alpha(t)$, {\it i.e.} neither positive,
nor negative fluctuations are favored. However, this symmetry is broken, when an
external field introduces a special direction and, as conjectured in \cite{UtsumiSaito2009,PRE2016}, alters
the microscopic noise in this class of systems. A consequence of this is a non-Gaussian
structure of their fluctuations.

This symmetry argument can be introduced formally using the principle of dissymmetry
due to Curie \cite{Shubnikov1988,Curie1894}. In the treatment of Onsager and Machlup
\cite{OM1953} it was implicit, that the systems they considered belong to Curie's
limiting point group of the highest symmetry $\infty / \infty \cdot m$ \cite{Shubnikov1988}.
A skewness of the fluctuations was observed in the systems, which lack some symmetry
operations with respect to this point group. In all these cases the bias of the
fluctuations is evidently due to a reduction of symmetry or, as introduced by Curie,
due to a dissymmetry \footnote{
  ``C'est la dissym\'{e}trie qui cr\'{e}e le ph\'{e}nom\`{e}ne'' (It is the dissymetry,
	which creates the phenomenon) \cite{Curie1894}.
} with respect to the systems regarded by Onsager and Machlup. In this paper, we
will develop a Langevin equation for the class of physical systems with a polar
direction \cite{Shubnikov1988}, which is due to the external force.

We emphasize the role of the spatial asymmetry, in contrast to the temporal asymmetry
of microscopically irreversible systems, which are dealt with by the Microscopic
Fluctuation Theory \footnote{
  To reverse the evolution of such systems, the sign of the external force should
  be changed together with that of the velocities and of the time \cite{MFT2015}.
} \cite{MFT2015}. It was long thought, that macroscopic irreversibility would be
an immanent property of all systems in an external potential. However, it was shown,
that the presence of a magnetic field does not change the time-reversal symmetry of
an equilibrium system \cite{Bonella2014}, while altering the probability structure
of its fluctuations \cite{UtsumiSaito2009}, as discussed earlier. In \Sec{end},
we will remark, though, that a magnetic field may not belong to the class of
systems, which are liable to the theoretical arguments of this paper.

Recently it was shown in \Ref{KSSH2015I}, that the original Langevin equation can
be extended by adding a third stochastic term, which acts as an external force and
causes, together with the white noise, a non-Gaussian behavior of the fluctuations
in a nonequilibrium system. To make further progress, the authors of \Refs{KSSH2015II,MorgadoQ2016}
{\it assumed}, that this term is a Poisson process, and added it to \Eq{dff},
which then reads
\begin{eqnarray}\EQ{pois}
  d\alpha(t) = -A \alpha(t) dt + B dW(t) + P(t)\text{.}
\end{eqnarray}
Here $P(t)$ is the Poisson process, also referred to as a shot noise \cite[Chapter 6]{Stratonovich},
which has a constant rate parameter and, in general, a variable intensity parameter
\cite{MorgadoQ2016}.

The Poisson process assigns a non-zero probability only to non-negative numbers,
so that the role it plays in \Eq{pois} is two-fold. First, it acts as an external
force and, second, it introduces a bias, which makes \Eq{pois} consistent with the
symmetry of the class of systems considered here. Also, the microscopic noise is
not represented solely by the white noise, but has an additional contribution due
to the stochastic nature of the third term, $P(t)$.

Apparently the shot noise in \Eq{pois} was motivated by its applications in the
theory of electric conductance \cite{KSSH2015II,BlanterB2000}. The Poisson process
is discrete and makes \Eq{pois} singular, {\it cf.} \cite{MorgadoQ2016}. Although
in the theory of electric conductance this singularity is explained by the discrete
nature of the electric charge \cite{BlanterB2000}, it is a rather curious aspect
of \Eq{pois} for a Langevin dynamics in the context of classical statistical mechanics.

In this paper we propose to replace the shot noise in the Ansatz of
\Refs{KSSH2015II,MorgadoQ2016} by a different non-Gaussian stochastic term, so that
the extended Langevin equation for the fluctuations in systems with a preferred
spatial direction would read:
\begin{eqnarray}\EQ{ours}
  d\alpha(t) = -A \alpha(t) dt + B dW(t) + C dE_\tau(t)\text{.}
\end{eqnarray}
Here $C$ is a positive or negative constant, while $dE_\tau(t)$ is a time
differential of a Gamma process $E_\tau(t)$ with a time scale parameter $\tau$ \footnote{
  The Gamma process is characterized by statistically independent increments, each
  having a Gamma probability distribution \cite[Chapter I]{FrenkNicolai2007,Dufresne1991,SteutelvHarn}.
}, {\it cf.} \cite[Chapter I]{FrenkNicolai2007,Dufresne1991,SteutelvHarn}.
We will call $dE_\tau(t)$ an {\it exponential noise} for a reason, clarified in
\Sec{gamma}.

The first improvement achieved by \Eq{ours}, with respect to \Eq{pois}, is its
statistical foundation, which is comparable to that of the original Langevin
equation. For, unlike the shot noise assumed in \Eq{pois}, both the white noise
and the exponential noise in \Eq{ours} can be deduced from simplified models of
the physical systems studied in this paper. In fact, Chandrasekhar \cite[Chapter I]{Chandrasekhar1943}
considered a discrete physical model of microscopic noise and obtained the Wiener
process as a continuous limit of a simple symmetric random walk. We adapt the same
approach here, by modeling the effect of an external force with an asymmetric random
walk, which in a similar continuous limit leads to the concept of exponential noise.
To complete the analogy with Chandrasekhar's method, we will verify in \Sec{gamma}
that, like the Wiener process, the Gamma process also arises in a more elaborate
model of a random flight.

The second advantage of \Eq{ours} is that, since the exponential noise is not singular,
in contrast to the shot noise, it fits more naturally into a stochastic differential
equation. While the Poisson process is discrete, it has a highly non-trivial continuous
counterpart \cite{Ilienko2013}, which is, nonetheless, not considered by the proponents
of \Eq{pois}. As mentioned earlier, the discrete nature of the third term in \Eq{pois}
introduces a singularity. In contrast to this, the Gamma process, like the Wiener
process, is non-singular and assumes a simple mathematical expression in both
continuous and discrete stochastic dynamics. Therefore the theory and the treatment
of the Langevin equation \Eq{ours} is in principle simpler than that of \Eq{pois}.

In \Sec{new} we will show that the statistical properties of the fluctuating
quantity $\alpha(t)$, which evolves according to the extended Langevin equation~(\ref{eq:ours}),
can be computed in terms of the same physical parameters, which characterize the
macroscopic state of the systems, studied in this paper. In particular, we will
confirm the non-Gaussian structure of the fluctuations by calculating their skewness.
Moreover, the {\it sign} of the skewness will depend on the external force in a
manner, which was already observed by an earlier experiment \cite{PRE2016}.

Finally, we note that, while the behavior exhibited by \Eq{pois} is qualitatively
very similar to that of \Eq{ours}, they differ in principle. Equation~(\ref{eq:pois})
may be applicable to some systems, which are listed in \Ref{MorgadoQ2016} and which
need a noise term of a discrete nature, {\it e.g.} systems of a small size. Nonetheless,
in this paper we argue that \Eq{ours} will find a broader range of applications
for a variety of physical systems considered by classical statistical mechanics.

\SEC{xmpl}{A simplified physical example}
To provide a physical insight into the dynamics, described by a Langevin
equation of the form \Eq{pois} or \Eq{ours}, we consider in this section a
macroscopic system as an idealization of the systems studied by classical statistical
mechanics, which are of interest in this paper. This will allow us to develop a
decomposition of the random noise into two parts: a symmetric and asymmetric random
processes, respectively. The latter will also incorporate the action of an external
field. As discussed afterwards, such a decomposition is not obvious at the level
of classical statistical mechanics, but it is much clearer in the example considered
below or some biological systems.

First, consider a man in a boat on a lake. When the man just sits in the boat, the
motion of the boat can be described by the Langevin equation (\ref{eq:dff}), where
the damping force would be due to the friction of the boat in the water and the
white noise would be caused by spontaneous fluctuations due to the waves on the
water surface and the wind blows. The stochastic term is motivated by the symmetry
of this physical system, which {\it a priori} does not favor any direction of motion,
so that the excitations pushing the boat forward or backward are equally probable.
As a result, the boat's velocity is distributed symmetrically around zero.


Now imagine, that the man begins to paddle, so that the boat is propelled forward,
by impulses, which are imparted by the oar at a certain rate. This rate will depend
on the rowing rhythm, which is, in general, irregular. For instance, the man
sometimes may row slower and other times faster. This irregularity of the rowing
rhythm can be accounted for statistically, if we regard the total force imparted
by the rower to the boat as a {\it random variable}, which has some definite average
value over a sufficiently long time interval and assumes {\it only non-negative
values}. This random variable, when added to the original Langevin equation \Eq{dff}
as a third term, yields a stochastic dynamics of the form \Eq{pois} or \Eq{ours}.

This new stochastic term, which represents an external force acting on the boat,
has one important attribute, which distinguishes it from the white noise term
discussed earlier. Namely, the external force assumes {\it only non-negative}
values, since the rower propels the boat always forward. Clearly, the average
velocity of the boat will then be positive. However, due to the external force,
the fluctuations of the velocity are amplified in the forward direction and suppressed in the backward
direction. This introduces a {\it bias} for the forward fluctuations of the boat's
velocity and, thus, reduces the symmetry of the system.

Here it is relevant to remark, that if the third term in \Eq{pois} or in \Eq{ours}
were either a constant or another Wiener process, the resulting fluctuations would
have a Gaussian structure. In fact, a certain change of variables would then transform
these equations into the form of \Eq{dff}. Therefore, both the stochastic nature
and the absence of negative values of the external force turn out to be crucial
to reproduce the non-Gaussian nature of the fluctuations in the class of systems
considered in this paper.

Generalizing the above argument, we will assume that in {\it all} physical systems
of interest for this paper, the random noise can be represented as a linear superposition
of a symmetric term, being the white noise, and {\it some} asymmetric term, which
corresponds to the external force. The latter is asymmetric, because it never takes
on negative values. Both models discussed in the Introduction, \Eq{pois} and \Eq{ours},
are constructed in this way.

The described decomposition of the random noise will be assumed, inspite of the
fact, that in a real thermodynamic, hydrodynamic or electrodynamic system the microscopic
noise and the external force can not be easily separated. For example, a Brownian
particle, which collides with the molecules of a fluid subject to a density gradient,
will drift, on average, in a certain direction. Then, since both the microscopic
noise and the external force acting on the Brownian particle are both caused by
the collisions with the fluid molecules, it is not obvious that each of the two
can be represented in the Langevin equation by a distinct separate term of stochastic
nature.

Nonetheless, some biological systems \cite{biology}, which are traditionally modeled
by stochastic dynamics, bear some similarity to the rower example. For instance,
a bacterium, swimming in a liquid by moving its flagellum, is an obvious parallel
with the man paddling a boat.

\SEC{gauss}{White noise}
This section reviews a simple 1-Dimensional (1D) random walk, as it was used by
Chandrasekhar \cite[Chapter I]{Chandrasekhar1943} to motivate the white noise term
for the Langevin dynamics described by Eqs.~(\ref{eq:dff}), (\ref{eq:pois}) and
(\ref{eq:ours}). In a slightly modified form, the same approach will be adopted
in the next section to deduce the form of the third term in \Eq{ours}.

Consider a particle, which suffers displacements along a line in the form of
discrete steps of equal length. The particle moves one step forward with
probability $p(1) = 1/2$, while the probability of a backward step is
$p(-1) = 1-p(1) = 1/2$. Equal probabilities of backward and forward displacements
do not favor any direction of the motion. This is consistent with the symmetry
of the system, described in \Sec{xmpl}, where a man sits in a boat without doing
anything.

The problem is to find the probability $W_N(m)$, that the particle has moved to
a point $m$ after a series of $N$ steps, $-N \le m \le N$. Without loss of generality,
we assume that the initial position of the particle is at zero $m_0=0$, so that
the total displacement $\Delta{m} = m - m_0 = m$ equals the final position of the
particle. The exact solution is given by the binomial distribution
\cite[Chapter I]{Chandrasekhar1943}:
\begin{equation}\EQ{wnm}
  W_N(m) = \frac{N! [p(1)]^{(N+m)/2} [p(-1)]^{(N-m)/2}}{[(N+m)/2]! [(N-m)/2]!}
  \text{.}
\end{equation}

As can be shown \cite[Chapter I]{Chandrasekhar1943}, the binomial distribution
\Eq{wnm} with $p(1) = 1/2$ approaches asymptotically a Gaussian \footnote{
	The Gaussian approximation $p_G(m)$ is accurate only around the mean value of
	$m$, {\it cf.} \cite{Keller2004}. Nonetheless, the original theory of Langevin
	equation is not concerned with corrections for the large deviations from the mean,
	which have vanishingly small probabilities.
} $p_G(m)$:
\begin{eqnarray}\EQ{nrm}
  W_N(m) \underset{N\to\infty}{\to} p_G(m) = (2 \pi N)^{-1/2}
    \exp\left( -\frac{m^2}{2 N} \right)\text{.}
\end{eqnarray}

To obtain the continuous limit of \Eq{nrm}, one introduces a density of sites
accessible to the particle per unit length $\rho = \Delta{m}/\Delta{x}$ and the
rate of displacements suffered per unit time $\nu=\Delta{N}/\Delta{t}$, where $\Delta{x}$
and $\Delta{t}$ are now, respectively, the continuous increments of coordinate and
time. Then, using \Eq{nrm} for $\rho$ and $\nu$ fixed in the limit $\Delta{x}\to0$
and $\Delta{t}\to0$, one finds from \Eq{nrm} the probability density of particle's
displacement $\Delta{x}$ within a time interval $\Delta{t}$ \cite[Chapter I]{Chandrasekhar1943}:
\begin{eqnarray}\EQ{one}
  p_G(\Delta{x}, \Delta{t}) = \frac{1}{\sqrt{4 B^2 \Delta{t}}}
      \exp\left(-\frac{\Delta{x}^2}{4 B^2 \Delta{t}} \right)\text{,}
\end{eqnarray}
where $B^2 = \nu/(2 \rho^2)$.

The coordinate $\Delta{x}$ is thus a Gaussian random variable. Therefore the continuous
limit, used to obtain \Eq{one}, can be interpreted in terms of the Wiener process,
{\it cf.} \cite[Chapter II Lemma I]{Chandrasekhar1943}, which allows us then to
pose that:
\begin{eqnarray}\EQ{two}
  \Delta{x} = B W(\Delta{t}) = B \int_0^{\Delta{t}} dW(t)
\end{eqnarray}

Instead of random displacements in the coordinate space, one can consider ``displacements''
in a velocity space, as in the problem of Brownian motion. This way one obtains
the white noise in the Langevin equation \Eq{dff}.

The equal length of each step in this simple random walk problem turns out to be
insignificant, as shown in Ref.~\cite[Chapter I]{Chandrasekhar1943}. In particular,
random flight models, where the size of each step is sampled from a variety of
probability distributions, lead again to a Gaussian distribution of the particle's
total displacement. The key aspect, therefore, is that the considered dynamics favors
no particular direction of motion, since it assigns equal probabilities to the
forward and backward displacements at each step.

\SEC{gamma}{Exponential noise}
To motivate the third term of the Langevin dynamics \Eq{ours}, we need to exclude
negative values of the external force it represents, as was suggested in \Sec{xmpl}.
This constraint can be implemented in the model of a simple random walk, reviewed
in the preceding section, by a minor modification. Namely, the particle now will
make {\it only forward} steps with the same probability $p(1)=1/2$ or it will
{\it remain at rest} with the probability $p(0)= 1 - p(1) =1/2$. Then the particle's
position can take on values in the integer range $0 \le m \le N$.

As in \Sec{gauss}, the problem is again to find the probability $W_N(m)$ that
the particle moved from its initial position at zero to the position $m = \Delta{m}$
after $N$ steps. The exact solution is again a binomial distribution, which can
be obtained by the same argument as \Eq{wnm} \cite[Chapter I]{Chandrasekhar1943}:
\begin{equation}\EQ{binomial}
  W_N(m) = \frac{N! [p(1)]^m [p(0)]^{N-m} }{m! (N-m)!}\text{.}
\end{equation}

As was done in \Sec{gauss}, the binomial probability mass \footnote{
  The probability mass function is the discrete analogue of the probability density
  function for continuous random variables.
} function \Eq{binomial} can be approximated by a Gaussian, which has a support
$(-\infty,\infty)$. However, we emphasized earlier, that the zero probability of
all negative values has a {\it physical} significance, because it acts as an
external force always acting in the forward direction. For that reason we have to
abandon the Gaussian approximation, which holds only for small deviations from the
mean, {\it cf.} \cite{Keller2004}.

Instead of the Gaussian approximation, it would be tempting to resort to the Poisson
distribution, which is traditionally used as a limiting case of the binomial
distribution \Eq{binomial} for $p(1)\to0$ \cite[Section 3.6]{Good1986,Durrett},
when a random variable of interest has its support on the half real line $[0, \infty)$.
In fact, this approximation leads to the shot noise term in \Eq{pois}, proposed
by \Refs{KSSH2015II,MorgadoQ2016}. However, in accordance with the ideas developed
in \Sec{xmpl}, the Poisson model is restricted to weak external forces, because
it requires a vanishing probability of forward displacements, so that the particle
mostly stays where it is. Fortunately, this rather restrictive assumption is irrelevant
for the case $p(1)=1/2$ of interest here, which actually is much better approximated by a different expression, as
follows.

For the special case $p(1)=1/2$, which is of interest here, we will demonstrate
that the binomial distribution \Eq{binomial} can be approximated by a Gamma
distribution $p_\Gamma$ \cite[Chapter 15]{Krishnamoorthy} in the limit $N\to\infty$:
\begin{equation}\EQ{gamma}
  W_N(m) \underset{N\to\infty}{\to} p_\Gamma = \frac{m^{N-1}}{\theta^N\Gamma(N)}
    \exp(-m/\theta)\text{,}
\end{equation}
with an average value $\avg{m} = N \theta$, a variance $\var{m} = N \theta^2$
and the parameter $\theta = p(1) = p(0) = 1/2$, which is the mean rate of forward
moves per step.

To the best of our knowledge, this work is the first to propose the Gamma
approximation of the binomial distribution, which is motivated by the fact that
the mean and the variance of $W_N(m)$ in \Eq{binomial} \cite[Chapter 3]{Krishnamoorthy}
coincide with those of $p_\Gamma$ \cite[Chapter 15]{Krishnamoorthy}. While a formal
mathematical argument is given in Appendix~\ref{sec:approx}, below we illustrate
the efficiency of \Eq{gamma} by the numerical simulations in \Fig{sim}. The Gamma
approximation becomes indistinguishable from a Gaussian for a sufficiently large $N$,
like in \Fig{sim}(b). An excellent agreement between the histograms and the Gamma
probability distribution is evident for increasing $N$ in \Fig{sim}, while the Poisson
distribution gives a poor representation of the simulation data, as expected for
a non-vanishing probability of the forward step $p(1)=1/2$.

\begin{figure*}[!ht]
\includegraphics[width=2\columnwidth]{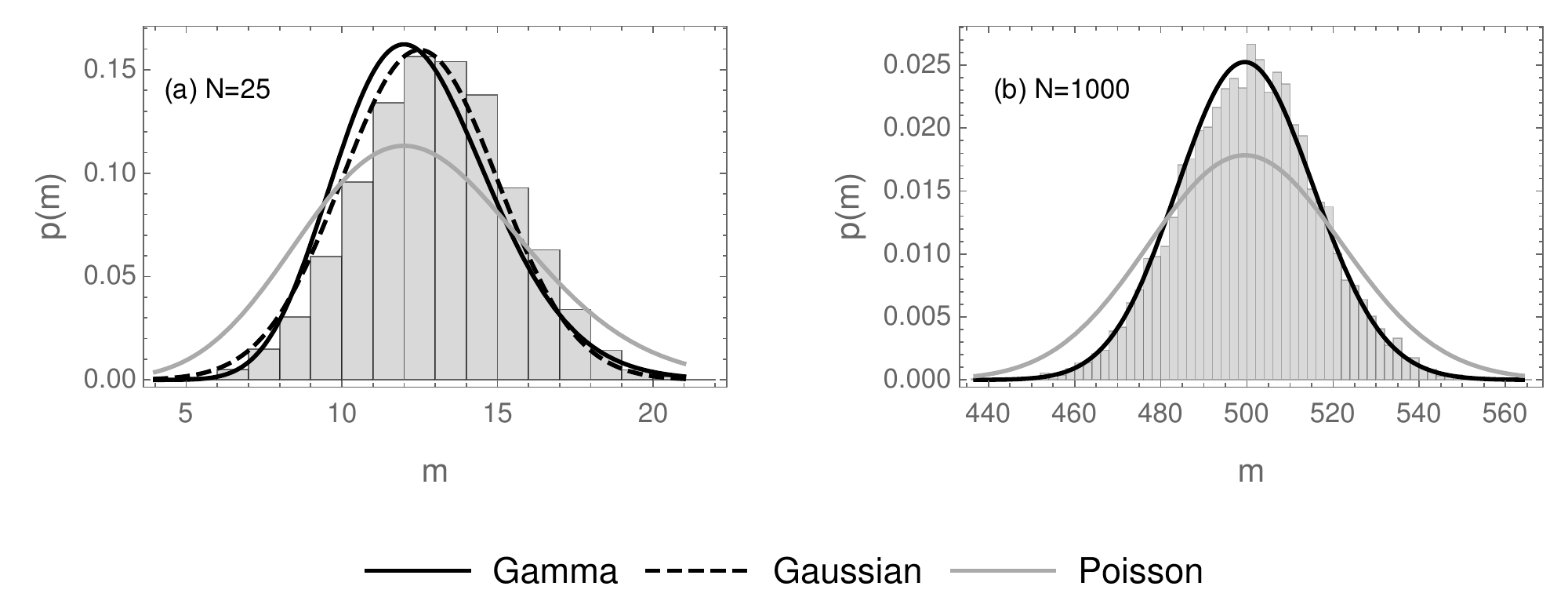}
\FIG{sim}{
  Histograms of two random walk simulations: (a) $N=25$, (b) $N=1000$. In (b)
  the Gaussian approximation is not plotted, because it is indistinguishable
  from the graph of the Gamma distribution when $N$ is so large. The Poisson
  distribution provides a poor approximation of the histogram data, since the
  probability of the forward step is not sufficiently small.
}
\end{figure*}

To specify the continuous counterparts of the discrete variables $m$ and $N$ in
\Eq{gamma}, we express the average displacement $\avg{\Delta{m}}$ in terms of the
displacement rate $\nu = \Delta{N}/\Delta{t}$ per unit time and the density of
positions $\rho = \Delta{m} / \Delta{x}$ per unit length, introduced in \Sec{gauss},
so that
\begin{equation}\EQ{spec}
  \avg{\Delta{x}} = \avg{\Delta{m}/\rho} = \Delta{N} \theta / \rho = C \Delta{t}/\tau \text{,}
\end{equation}
where $C = \theta / \rho$ and $\tau = 1/\nu$.

If we fix $\rho$ and $\nu$ for $\Delta{x}\to0$ and $\Delta{t}\to0$ as in \Sec{gauss},
the continuous limit of \Eq{gamma} then follows from the property of infinite
divisibility of the Gamma distribution \cite[Chapter I]{SteutelvHarn}. This means,
in particular, that \Eq{gamma} can be represented as a sum of $N$ independent random
variables distributed according to an {\it exponential law} of intensity $\theta$
\cite[Chapter 15]{Krishnamoorthy}. This property and \Eq{spec} both motivate the
replacing the sum over $N$ in the continuous limit by a time integral of the
{\it exponential noise} $dE_\tau(t)$, which is then defined as the differential of the
Gamma process $E_\tau(t)$, {\it cf.} \cite{FrenkNicolai2007,SteutelvHarn}, so that
$\Delta{x}$ obeys the probability law of a Gamma distribution:
\begin{eqnarray}
  \EQ{a1}
  p(\Delta{x},\Delta{t}) &=& \frac{\Delta{x}^{\Delta{t}/\tau-1}}{C^{\Delta{t}/\tau}
    \Gamma(\Delta{t}/\tau)} \exp\left(
      -\frac{\Delta{x}}{C}\right
      )
  \\\EQ{a2}
  \Delta{x} &=& C E(\Delta{t}) = C \int_0^{\Delta{t}} dE_\tau(t)
\text{.}
\end{eqnarray}
Here \Eq{a1} and \Eq{a2} define the properties of the exponential noise in the
same manner as \Eq{one} and \Eq{two}, respectively, determine the properties of
the white noise.

We concluded \Sec{gauss} by pointing out, that the white noise also emerges in models
of a random flight, where the length of the particle's displacements is sampled
at each step from a symmetric probability distribution. Similarly, the random walk
considered in this section can be generalized to a random flight with steps of a
variable length. If the length of each displacement is sampled from an exponential
probability distribution, the Gamma distribution of the particle's total displacement
arises again as the sum of independent exponentially distributed random variables,
a property already used above to deduce \Eq{a1}.

The analogy between the random walk problems of this and of the previous section
is now complete. In \Sec{gauss} the Wiener process was obtained as the continuous
limit of the symmetric random walk problem. By a similar argument, above we deduced
the Gamma process from the continuous limit of an asymmetric random walk.

Finally, we conjecture that the generalization of the Langevin equation \Eq{dff}
to systems with a preferred spatial direction, induced by an external force, is
given by \Eq{ours}. The new third term of that equation, {\it i.e.} $C dE_\tau(t)$, is
proportional to the exponential noise, defined by Eqs.~(\ref{eq:a1})~and~(\ref{eq:a2}).
If instead of the coordinate space we considered the velocity space of a Brownian
particle, $C/\tau$ would have a physical meaning of the mean external force, as
discussed in the next section.

\SEC{new}{Solution of the extended Langevin equation}
The extended Langevin equation (\ref{eq:ours}), can be solved by a straightforward
generalization of the method used in Ref.~\cite[Chapter II]{Chandrasekhar1943}
for \Eq{dff}. To do this, we first denote by $\epsilon(t)$ the sum of the folowing
two stochastic terms
\begin{equation}\EQ{super}
  \epsilon(t) = B dW(t)/dt + C dE_\tau(t)/dt\text{,}
\end{equation}
so that \Eq{ours} can be rewritten as
\begin{equation}\EQ{langevin}
  d\alpha(t)/dt = -A \alpha(t) + \epsilon(t)\text{.}
\end{equation}

A formal solution of \Eq{langevin} was already given in Ref.~\cite[Chapter II]{Chandrasekhar1943},
which we repeat here in our notation:
\begin{equation}\EQ{sln}
  \alpha(t) = \alpha_0 \exp(- A t) + \exp(-A t) \int_0^t ds \exp(A s) \epsilon(s)\text{,}
\end{equation}
where $\alpha_0 = \alpha(0)$ is an initial value condition.

Since in this paper we do not need a solution of \Eq{langevin} for a particular
physical system, we will focus our attention on the Steady-State (SS) solution
$\alpha_\mathrm{SS}$. This will suffice for our interest in the statistical nature
of the fluctuations described by \Eq{ours}, as wa anticipated in Introduction.
Taking the steady-state limit of \Eq{sln} we have
\begin{eqnarray}\EQ{ss}
  \alpha_\mathrm{SS} = \lim_{t\to\infty} \alpha(t) = \lim_{t\to\infty} \left[
    \int_0^t ds \exp[A (s-t)] \epsilon(s)\right
  ]\text{.}\nonumber\\
\end{eqnarray}

The decomposition of $\epsilon(t)$ in \Eq{super} splits the integral on the right
hand side of \Eq{ss} into a sum of two terms:
\begin{equation}\EQ{split}
  B \int_0^t \exp[A (s-t)] dW(s) + C \int_0^t \exp[A (s-t)] dE_\tau(s) \text{.}
\end{equation}

The first integral in \Eq{split} is given by Lemma I of Ref.~\cite[Chapter II]{Chandrasekhar1943}.
The result of integration is a normally distributed random variable $\beta(t)$,
with a zero mean and a variance
$$\var{\beta(t)} = \frac{B^2}{2 A} [1-\exp(-2 A t)]\text{,}$$
which in the steady-state limit becomes
$$\lim_{t\to\infty}\var{\beta(t)}= \frac{B^2}{2 A}\text{.}$$

For the second integral in \Eq{split} we need another result, which is analogous
to the above cited Lemma I of Ref.~\cite[Chapter II]{Chandrasekhar1943} but for
the exponential noise, is established by Lemma~\ref{lemma} in Appendix~\ref{sec:lemma}.
There we show, that the second integral in \Eq{split} is a random variable, given
by a Gamma-mixture distribution, and compute its mean, variance and skewness.
From now on we will denote this random variable by $\gamma(t)$.

In summary, we found that $\alpha(t)$ is a sum of two independent random variables,
a Gaussian $\beta(t)$ and a Gamma-mixture $\gamma(t)$. Therefore the cumulant-generating
function, {\it cf.} \Sec{approx}, of $\alpha(t)$ is a sum of the Gaussian
cumulant-generating function \cite[Chapter 10]{Krishnamoorthy} and the Gamma-mixture
distribution, obtained in Appendix~\ref{sec:lemma}. This allows us to calculate the
mean, variance ($\var{\alpha(t)}$) and the skewness ($\skw{\alpha(t)}$) of
$\alpha(t)$. Omitting straightforward computational details, we write immediately
the final results for the steady-state solution $\alpha_\mathrm{SS}$
\begin{eqnarray}\EQ{avg}
  \avg{\alpha_\mathrm{SS}} &=& \lim_{t\to\infty} \{\kappa_1[\beta(t)]+\kappa_1[\gamma(t)]\} =  \frac{C}{\tau A}
  \\\EQ{var}
  \var{\alpha_\mathrm{SS}} &=& \lim_{t\to\infty} \{\kappa_2[\beta(t)]+\kappa_2[\gamma(t)]\} \nonumber\\
		&=& \frac{B^2 + C^2/\tau}{2 A}
  \\\EQ{skw}
  \skw{\alpha_\mathrm{SS}} &=& \lim_{t\to\infty} \frac{\kappa_3[\beta(t)]+\kappa_3[\gamma(t)]}{\var{\alpha(t)}^{3/2}}
      \\&=& \frac{4 \sqrt{2 A} C^3/\tau}{3 (B^2 + C^2/\tau)^{3/2}}\text{,}
\end{eqnarray}
where $\kappa_i$ stands for the $i$-th cumulant.

Since the skewness of the steady-state solution does not vanish for $C > 0$, {\it cf.} \Eq{skw},
the structure of the fluctuations is non-Gaussian. Moreover, the skewness has the
same sign as $C$, which is consistent with the experimental observations of Ref.~\cite{PRE2016}.
Higher order statistics, than those in Eqs.~(\ref{eq:avg})-(\ref{eq:skw}), can also
be computed from the cumulant-generating function.

The physical meaning of the parameters $A$, $B$ and $C$ depends on the problem,
modeled by the extended Langevin dynamics. For instance, for a Brownian particle,
$A$ is the friction constant, while the parameter $B$ can be computed from the
kinetic temperature \footnote{
  The kinetic temperature is proportional to the variance of the particle's linear
  momentum distribution, {\it cf.} \Eq{var}.
} once $A$ and $C$ are known. Finally, as explained further, $C/\tau$ is the average
magnitude of the external force. This can be seen, if we take the steady-state
average of both sides in \Eq{langevin}, which corresponds to the macroscopic
dynamics:
\begin{equation}\EQ{avg1}
  \avg{d\alpha(t)/dt} = -A \avg{\alpha(t)} + \avg{\epsilon(t)} = 0\text{,}
\end{equation}
where the left hand side must vanish in the steady state by definition. From Eqs.~(\ref{eq:super})
and (\ref{eq:avg1}) one finds:
\begin{eqnarray}\EQ{avg2}
  \avg{\alpha(t)} &=& A^{-1}\avg{B dW(t)/dt + C dE_\tau(t)/d(t)} \nonumber\\
		&=& A^{-1}\avg{C dE_\tau(t)/dt}\text{,}
\end{eqnarray}
since the average effect of the white noise, $dW(t)$ vanishes. Finally, $\avg{C dE_\tau(t)/dt} = C/\tau$,
because
\begin{equation}\EQ{avg3} d\avg{C \int_0^t dE_\tau(t)}/dt = d(C t/\tau)/dt = C/\tau\text{,}\end{equation}
due to Eqs.~(\ref{eq:spec})-(\ref{eq:a2}).

Combining Eqs.~(\ref{eq:avg1})-(\ref{eq:avg3}), we have
$$\avg{\alpha(t)} = \frac{C}{\tau A}\text{,}$$
which relates the terminal value $\avg{\alpha(t)}$ to the external force $C/\tau$
and the friction coefficient $A$. For the complete description of the extended Langevin
dynamics, the time scale parameter $\tau$, which is a new characteristic of a system,
needs to be found as well.

In other words, all constants $A$, $B$ and $C/\tau$ are physical observables, which
can be determined by measurements. In fact, these quantities are used to characterize
physical systems in steady states, as was shown in the example of a Brownian motion
above.

\SEC{end}{Conclusion}
The Langevin dynamics of \Eq{dff} was extended by a new term to obtain \Eq{ours},
which generalizes the theory of Brownian motion, as well as the Onsager-Machlup
theory of fluctuations, from spatially symmetric equilibrium systems to equilibrium
and nonequilibrium steady-state systems with a preferred spatial direction. We also
provided statistical arguments in \Sec{gamma}, which allowed us to deduce the form of
the new term.

A method of solving the extended Langevin equation was demonstrated in \Sec{new}.
In particular, we showed how the statistical properties of its steady-state solution
can be computed from macroscopic physical observables. The steady-state probability
distribution of the fluctuations is also characterized by the cumulant-generating
function, which can be expressed using the dilogarithm, a special mathematical function,
{\it cf.} Appendix~\ref{sec:lemma}. The corresponding probability density function,
which apparently can not be expressed in terms of elementary functions, can be in
principle approximated by the Modulated Gaussian distribution \cite{PRE2016} for
practical applications.

The theory, presented in this paper, should be applicable to a variety of physical
systems in classical statistical mechanics, such as an equilibrium fluid system
in a gravitational potential or an electric current driven by a voltage difference.
Applications of \Eq{ours} to particular systems opens new perspectives for the future
research in equilibrium and nonequilibrium statistical physics.

Finally, we would like to make a remark about the equilibrium systems in the magnetic
field. The vector of a magnetic field has an {\it axial} nature, which means that
it does not select a preferred direction, but rather determines a sense of rotation
in its normal plane. As discussed in \Sec{intro}, such systems have a symmetry of
the Curie's limiting point group $\infty / m$, which does not admit a preferred
spatial direction \cite{Shubnikov1988}. This is in contrast to the forces,
described by {\it polar} vectors considered here, {\it e.g.} the electric field,
which belongs to the symmetry group $\infty \cdot m$ \cite{Shubnikov1988}. For
this reason, systems subject to a magnetic field bear more similarity with the
rotating systems, which still may need a further generalization of the Langevin
equation.

\appendix
\SEC{approx}{Gamma approximation of the binomial distribution}
In this Appendix we provide a formal mathematical argument for the Gamma approximation
\Eq{gamma} of the binomial distribution \Eq{binomial}. For this we will compare
a {\it cumulant-generating function} \cite[Section 26.1]{Berberan-Santos2006,AbramowitzStegun}
of the Gamma distribution ($\mathcal{C}_\Gamma)$ with that of the Binomial distribution
($\mathcal{C}_B$).

We recall that a probability distribution of a random variable $m$ is uniquely
determined by its probability mass (or density) function or, equivalently, by its
cumulant-generating function
$\mathcal{C}(k)$:
\begin{equation}\EQ{cgf}
  \mathcal{C}(k) = \ln \avg{\exp(k m)}_m = \sum_{i=1}^{\infty} \kappa_j \frac{k^j}{j!}\text{.}
\end{equation}
where $k$ is the dual of $m$, while the angle brackets denote the average value.
The Taylor coefficients $\kappa_j$ in \Eq{cgf} are the {\it cumulants} of $m$.

The first and second cumulants of a probability distribution are equal to its mean
and its variance, respectively. The mean and the variance of the binomial distribution
\Eq{binomial} are equal to those of the Gamma distribution \Eq{gamma}, respectively,
if $p(1) = \theta = 1/2$, {\it cf.} \cite[Chapter 3 and 15]{Krishnamoorthy}. It
follows then, that their cumulant-generating functions agree up to the third order
term in $k$, {\it i.e.}
$$\mathcal{C}_B(k) - \mathcal{C}_\Gamma(k) = \mathcal{O}(k^3)\text{,}$$
because the first two cumulants cancel each other in the series expansion \Eq{cgf}
for $\mathcal{C}_B(k)$ and $\mathcal{C}_\Gamma(k)$, respectively.

For the binomial distribution \Eq{binomial}, there are no asymptotic formulae of
an accuracy higher than $\mathcal{O}(k^3)$ with the support on the half real line
\footnote{
  However, there may exist approximations of \Eq{binomial} with the same order of
  accuracy as \Eq{gamma}.
}. The error of the third order is due to the skewness of the Gamma distribution.
This property is inherent in all distributions, which have the support on the half
real line, as a consequence of their obvious asymmetry. In other words, any asymptotic
formula of \Eq{binomial} acquires skewness in the limit $N\to\infty$, if its support
spreads over all non-negative reals $[0,N)_{N\to\infty}$, as considered in \Sec{gamma}.

\SEC{lemma}{Gamma-mixture probability distribution}
When solving the extended Langevin equation in \Sec{new}, we had to evaluate a
steady-state limit for a definite stochastic integral of the form:
\begin{equation}\EQ{I}
  I = \int_0^t dE_\tau(s) \phi(s)\text{.}
\end{equation}
where $\phi(s) = C \exp[A(s-t)]$ and $dE_\tau(s)$ is the exponential noise with
the time scale parameter $\tau$.

Below we will consider a more general function $\phi(s)$. We will obtain the
cumulant-generating function of the random variable $I$ \cite[Section 26.1]{Berberan-Santos2006,AbramowitzStegun},
{\it cf.} \Sec{approx}, and compute some of its statistical moments, {\it i.e.}
mean, variance, and skewness.

\begin{lemma}\label{lemma}
    Let $I$ be a random variable given by $$I = \int_0^t dE_\tau(s) \phi(s)\text{,}$$
    where $\phi(s)$ is some integrable function and $dE_\tau(s)$ is the exponential noise
		with the time scale parameter $\tau$.	Then $I$ has a Gamma-mixture distribution,
		which is described by a cumulant-generating function
    $$\mathcal{C}(\tilde{I}) = -\int_0^t \frac{ds}{\tau} \ln[1-\phi(s)\tilde{I}]\text{,}$$
    where $\tilde{I}$ is the dual of $I$ in the reciprocal Laplace space.
\end{lemma}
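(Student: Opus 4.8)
The plan is to realize $I$ as the limit of Riemann--Stieltjes sums and to exploit the two defining features of the Gamma process: its increments are statistically independent and each is Gamma distributed, with a shape parameter proportional to the length of the time interval. First I would partition $[0,t]$ into subintervals $[s_i,s_{i+1}]$ of width $\Delta s_i=s_{i+1}-s_i$ and approximate the stochastic integral by $I_n=\sum_i \phi(s_i)\,\Delta E_i$, where $\Delta E_i=E_\tau(s_{i+1})-E_\tau(s_i)$. From the construction of the exponential noise in Eqs.~(\ref{eq:a1})--(\ref{eq:a2}), each increment $\Delta E_i$ is a Gamma variable of shape $\Delta s_i/\tau$ and unit scale, and increments over disjoint intervals are independent.

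The core of the argument rests on two elementary properties of cumulant-generating functions, here taken in the convention of \Eq{cgf} with the conjugate variable $\tilde I$ playing the role of $k$. Since the cumulant-generating function of a sum of independent variables is the sum of the individual cumulant-generating functions, \[\mathcal{C}_{I_n}(\tilde I)=\sum_i \mathcal{C}_{\phi(s_i)\Delta E_i}(\tilde I).\] Next, scaling a unit-scale Gamma variable of shape $\alpha$ by the constant $\phi(s_i)$ produces a Gamma variable of the same shape and scale $\phi(s_i)$, whose moment-generating function $\avg{\exp(\tilde I\,\phi(s_i)\Delta E_i)}=[1-\phi(s_i)\tilde I]^{-\alpha}$ follows directly from the Gamma density in \Eq{a1}. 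Its logarithm is $-\alpha\ln[1-\phi(s_i)\tilde I]$, so with $\alpha=\Delta s_i/\tau$ I obtain \[\mathcal{C}_{I_n}(\tilde I)=-\sum_i \frac{\Delta s_i}{\tau}\,\ln[1-\phi(s_i)\tilde I].\]

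Finally I would pass to the continuum limit with $\max_i\Delta s_i\to0$. The right-hand side is a Riemann sum for the integrand $-\tau^{-1}\ln[1-\phi(s)\tilde I]$, which is integrable on $[0,t]$ whenever $\phi$ is integrable and $\tilde I$ lies in the domain $\phi(s)\tilde I<1$; hence the sum converges to the claimed expression $\mathcal{C}(\tilde I)=-\int_0^t(ds/\tau)\ln[1-\phi(s)\tilde I]$. Reading this formula as a continuous superposition of the elementary Gamma cumulant-generating functions $-\alpha\ln(1-\theta\tilde I)$, in which the scale $\theta$ sweeps through the values $\phi(s)$ weighted by the infinitesimal shape $ds/\tau$, is exactly what identifies the law of $I$ as a Gamma mixture. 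As a by-product, expanding $\ln[1-\phi(s)\tilde I]=-\sum_{j\ge1}[\phi(s)\tilde I]^j/j$ and matching \Eq{cgf} gives the cumulants $\kappa_j=\tau^{-1}(j-1)!\int_0^t\phi(s)^j\,ds$, from which the mean, the variance $\var{I}$, and the skewness $\skw{I}$ used in \Sec{new} follow at once for $\phi(s)=C\exp[A(s-t)]$.

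The step I expect to be the main obstacle is making this continuum limit rigorous, since two limits must be controlled simultaneously: the convergence of the Riemann--Stieltjes sums to a well-defined integral against the Gamma process, and the convergence of the associated cumulant-generating functions to that of the limiting variable. At the level of rigor adopted here, the convergence of the Riemann sum together with the independent-increment structure is sufficient; a fully rigorous justification would instead invoke the L\'evy--Khintchine representation of the Gamma process as a pure-jump L\'evy process and recognize the integral above as its exponent, which automatically guarantees both the convergence and the Gamma-mixture (generalized Gamma convolution) character of $I$.
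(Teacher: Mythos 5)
Your proposal is correct and follows essentially the same route as the paper: discretize the stochastic integral into a Riemann--Stieltjes sum of independent Gamma increments, add the component cumulant-generating functions $-(\Delta s/\tau)\ln[1-\phi(s)\tilde I]$, and pass to the Riemann-sum limit. The extra material you supply --- the explicit Gamma moment-generating function, the cumulant expansion $\kappa_j=\tau^{-1}(j-1)!\int_0^t\phi(s)^j\,ds$, and the remark on the L\'evy--Khintchine route --- is consistent with, and slightly more detailed than, the paper's argument.
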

\begin{proof}[Proof]
Partitioning the domain of integration $[0,t]$ into $n$ subintervals of length
$\Delta{t}$, so that $n \Delta{t} = t$, we express $I$ as the limit of the
following discrete sum $S$:
\begin{equation}\EQ{sum}
I \underset{n\to\infty}{\to} S_n = \sum_{j=0}^{n-1} \phi(j\Delta{t})
  \int_{j\Delta{t}}^{(j+1)\Delta{t}} dE_\tau(s) = \sum_{j=0}^{n-1} r_j
\text{,}
\end{equation}
where the index $j$ runs through all subintervals.

By virtue of Eqs.~(\ref{eq:a1}-\ref{eq:a2}), each term of the summation $r_j$ in
\Eq{sum} is an independent Gamma-distributed random variable. In other words, the
probability distribution of $I$ is a discrete {\it mixture of Gamma-distributed}
random variables or, equivalently, a discrete Gamma-mixture distribution. The shape
and scale parameters \cite[Chapter 15]{Krishnamoorthy} of each component $r_j$
are, respectively, $\Delta{t}/\tau$ and $\phi(j\Delta{t})$, while the cumulant-generating
function of their sum is:
\begin{equation}\EQ{GS}
  \mathcal{C}(\tilde{S}_n) = -\sum_{j=0}^{n-1} \frac{\Delta{t}}{\tau} \ln[1-\phi(j\Delta{t}) \tilde{S}_n]\text{,}
\end{equation}
where $\tilde{S}_n$ is the dual of $S_n$.

One should recognize in \Eq{GS} a Riemann sum, which in the limit $\Delta{t}\to0$
($n\to\infty$) becomes an integral. Then from Eqs.~(\ref{eq:sum}) and (\ref{eq:GS})
we conclude that
\begin{equation}\EQ{cgfI}
  \mathcal{C}(\tilde{I})
    = \lim_{n\to\infty}\mathcal{C}(\tilde{S}_n)
    = -\int_0^t \frac{ds}{\tau} \ln[1 - \phi(s) \tilde{I}]\text{,}
\end{equation}
which is the cumulant-generating function of the Gamma-mixture distribution.
\end{proof}

The cumulants $\kappa_i(I)$, and hence the statistical moments of $I$, can be obtained
either by differentiation of the cumulant-generating function given by Lemma~\ref{lemma},
{\it cf.} \Eq{cgf}, or by using the calculus of cumulants. While the latter method
was adopted in \Sec{new} to compute the skewness of the steady-state solution
$\alpha_\mathrm{SS}$, {\it cf.} \Eq{skw}, in this section the former approach is
more convenient.

Differentiating \Eq{cgfI} with respect to $\tilde{I}$ we find:
\begin{eqnarray}\EQ{momI}
  \avg{I} &=& \kappa_1(I) = \left. \frac{d\mathcal{C}(\tilde{I})}{d\tilde{I}}\right|_{\tilde{I}=0}
      = \int_0^t \frac{ds}{\tau} \phi(s)\nonumber\\
  \var{I} &=& \kappa_2(I) = \left. \frac{d^2\mathcal{C}(\tilde{I})}{d\tilde{I}^2}\right|_{\tilde{I}=0}
      = \int_0^t \frac{ds}{\tau} \phi(s)^2 \nonumber\\
  \kappa_3(I) &=& \left. \frac{d^3\mathcal{C}(\tilde{I})}{d\tilde{I}^3}\right|_{\tilde{I}=0}
      = 2 \int_0^t \frac{ds}{\tau} \phi(s)^3 \text{,}
\end{eqnarray}
from which the skewness can be computed using its definition in terms of cumulants
$\skw{I}=\kappa_3(I)/\kappa_2(I)^{3/2}$.

Now, returning to \Eq{I}, we need to use $\phi(s) = C \exp[A(s-t)]$ in Eqs.~(\ref{eq:cgfI})-(\ref{eq:momI}).
Then the cumulant-generating function, given by \Eq{cgfI}, becomes \footnote{
  We evaluated the integral in \Eq{cgfI} for $\phi(s) = C \exp[A(s-t)]$ using
  a software for symbolic computations \cite{Wolfram}.
}:
\begin{equation}\EQ{poly1}
  \mathcal{C}(\tilde{I}) = \frac{1}{\tau A} \left\{
    \Li_2(C \tilde{I})-\Li_2[C \tilde{I} \exp(-A t)]
  \right\}\text{,}
\end{equation}
where $\Li_2$ stands for a dilogarithm function \cite[Section 27.7]{AbramowitzStegun}.
So that the steady-state limit of \Eq{poly1} yields
\begin{equation}\EQ{poly2}
  \lim_{t\to\infty} \mathcal{C}(\tilde{I}) = \Li_2(C \tilde{I})/ (\tau A)\text{.}
\end{equation}

To evaluate \Eq{momI} for the form of $\phi(s)$, chosen above, it is convenient
to consider first a general integral of the following form
\begin{eqnarray}\EQ{n}
  \int_0^t \frac{ds}{\tau} \phi(s)^n &=& \int_0^t \frac{ds}{\tau} C^n \exp[n A(s-t)] \nonumber\\
		&=& \frac{C^n}{n \tau A}[1 - \exp(-n A t)]\text{,}
\end{eqnarray}
for any integer $n$.

Using then \Eq{n}, we can calculate the statistics given in \Eq{momI}, as well as
their steady-state limits:
\begin{eqnarray}\EQ{momSS}
  \lim_{t\to\infty} \avg{I} = \lim_{t\to\infty}\left\{ \frac{C}{\tau A} [1-\exp(-A t)] \right\} = \frac{C}{\tau A} \nonumber\\
  \lim_{t\to\infty} \var{I} = \lim_{t\to\infty}\left\{ \frac{C^2}{2\tau A} [1-\exp(-2 A t)] \right\} = \frac{C^2}{2\tau A}\nonumber\\
  \lim_{t\to\infty} \kappa_3 = \lim_{t\to\infty}\left\{ \frac{2 C^3}{3\tau A} [1-\exp(-3 A t)] \right\} = \frac{2 C^3}{3\tau A}\text{.}\nonumber\\
\end{eqnarray}

\bibliographystyle{apsrev4-1}
\bibliography{References}

\end{document}